\documentclass[12pt]{article} \usepackage{fullpage}
\usepackage{amssymb,amsmath}
\usepackage{lineno}

\usepackage{nicefrac}
\usepackage{url}
\usepackage{color}
\usepackage[american]{babel}
\usepackage{graphicx}
\usepackage{version}
\usepackage{subfigure}
\usepackage[textsize=footnotesize]{todonotes}

\newif\iffull
\fulltrue

\graphicspath{{figures/}}

\usepackage{amsthm}
\newtheorem{theorem}{Theorem}
\newtheorem{corollary}{Corollary}
\newtheorem{conjecture}{Conjecture}

\newtheorem{lemma}{Lemma}

\newcommand{\calT}{{\ensuremath{\cal T}}}

\newcommand{\leaveout}[1]{}

\date{}
\title{Are highly connected 1-planar graphs Hamiltonian?}
\author{Therese Biedl}

\begin{document}

\maketitle
\begin{abstract}
It is well-known that every planar 4-connected graph has a 
Hamiltonian cycle.  In this paper, we study the question
whether every 1-planar 4-connected graph has a  Hamiltonian
cycle.  We show that this is false in general, even for
5-connected graphs, but true if
the graph has a 1-planar drawing where every region is a triangle.
\end{abstract}


\section{Introduction}

\iffull
Planar graphs are graphs that can be drawn without crossings.
They have been one of the central
areas of study in graph theory and graph algorithms, and there
are numerous results both for how to solve problems more easily
on planar graphs and how to draw planar graphs (see e.g.~\cite{Aig84,NC88,NR04})).
We are here interested in a theorem by Tutte \cite{Tutte56} that states 
that every planar 4-connected graph has a Hamiltonian cycle 
(definitions are in the next section).  This was an improvement over
an earlier result by Whitney that proved the existence of a Hamiltonian
cycle in a 4-connected {\em triangulated} planar graph.
There have been many
generalizations and improvements since; in particular we can
additionally fix the endpoints and one edge that the Hamiltonian
cycle must use \cite{Thomassen83,Sanders97}.   Also, 4-connected
planar graphs remain Hamiltonian even after deleting 2 vertices
\cite{TY94}.
Hamiltonian cycles
in planar graphs can be computed in linear time; this is quite
straightforward if the graph is triangulated \cite{AKS84} and a
bit more involved for general 4-connected planar graphs \cite{CN89}.

There are many graphs that
are {\em near-planar}, i.e., that are ``close'' to planar graphs.
One such generalization are the {\em 1-planar graphs}, which are
the graphs that can be drawn with at most one crossing per edge
\cite{Ringel1965}.
Naturally one wonders which of the properties of planar graphs
carry over to 1-planar graphs.  Many results have been developed,
for example it is well-known that 1-planar graphs have at most
$4n-8$ edges \cite{Schumacher} and are 6-colourable \cite{Ringel1965}
and it has been characterized when 1-planar graphs have a 
straight-line drawing \cite{Tho88}.
See \cite{KLM17} for a recent overview of many existing results
for 1-planar graphs.

This paper investigates results on Hamiltonicity
in sufficiently connected 1-planar graphs.  In particular, we show that
every 4-connected {\em triangulated} 1-planar graph has a Hamiltonian cycle.
This is done via a detour: we show that (with the exception of $K_5$)
every triangulated
1-planar 4-connected graph contains a triangulated planar
4-connected graph as a subgraph, and then appeal to Tutte's
theorem.  This in particular implies that all the generalizations,
such as fixing the endpoints or one edge to be visited, or staying
Hamiltonian after deleting two vertices, carry
over to triangulated 1-planar 4-connected graphs.  

The argument
crucially requires that the graph is triangulated:  we can easily construct
a 4-connected 1-planar graph that does not have a Hamiltonian path (and
not even a near-perfect matching, which is a weaker condition).  In fact,
even 5-connected 1-planar graphs do not always have near-perfect matchings,
while the question remains open for 1-planar graphs of higher connectivity.

\else
Write a little bit here
\fi

\section{Background}

Let $G=(V,E)$ be a graph with $n$ vertices and $m$ edges.
We assume familiarity with graph theory (see e.g.~\cite{Die12}) and
review only some of the notations below.   All graphs in this paper are
{\em simple}, i.e., no edge connects a vertex with itself and
no two edges connect the same pair of vertices.  A {\em subgraph} of $G$
is obtained by deleting some vertices (and all their incident edges)
and/or deleting some edges.  A {\em spanning subgraph} is obtained by
deleting only edges.  A graph $G$ is called {\em connected}
if any two vertices $v,w$ are connected by a path within $G$.
A {\em connected
component} of $G$ is a maximal subgraph that is connected.  A {\em cutting
$k$-set} is a set $S$ of at most $k$ vertices such that $G\setminus S$ has
more connected components than $G$.  Graph $G$ is called {\em $k$-connected}
if it has no cutting $(k{-}1)$-set, and if further it has at least $k$
vertices.  Menger's theorem states that $G$ is $k$-connected if and only
if for any two vertices $v,w$ there are at least $k$ paths from $v$ to $w$
that are {\em interior vertex-disjoint}, i.e., they have no vertex in common except the
ends $v$ and $w$.

A {\em Hamiltonian path/cycle} is a path/cycle in the graph that visits
every vertex exactly once.  A {\em matching} is a set $M$ of edges such that
no two edges in $M$ have a common endpoint.  It is called {\em near-perfect}
if $M=\lfloor n/2 \rfloor$, which is to say, it is as big as any matching
can be in an $n$-vertex graph.  Note that any graph with a Hamiltonian path
also has a near-perfect matching, though the reverse is not always true.

A graph $G$ is called {\em planar} if it can be drawn in the plane without
crossing.  If one particular such drawing $\Gamma$
is fixed then $G$ is called {\em plane}. 
The maximal connected regions of $\mathbb{R}\setminus \Gamma$ are called {\em faces}.
Graph $G$ is called {\em triangulated} if every face is bounded by a triangle.  A
triangulated simple planar graph is 3-connected and has a unique planar
embedding.

A graph is called {\em 1-planar} if it can be drawn in the plane
such that every edge has at most one crossing.  We assume here that the
drawing is {\em good}, which means that no edges with a common endpoint
cross and no edge crosses itself.   If one particular
such drawing $\Gamma$ is fixed then $G$
is called {\em 1-plane}.  In a 1-plane graph an 
edge is called {\em crossed} if it
has a crossing and {\em uncrossed} otherwise.  Generalizing, a cycle 
is called {\em crossed} if at least one of its edges is crossed and
{\em uncrossed} otherwise.
The maximal connected regions of $\mathbb{R}\setminus \Gamma$ are bounded
by a sequence of vertices and crossings; these are called the {\em corners}
of the region.  
A {\em triangulated} 1-plane graph is one where
all regions have exactly three corners.    Such a graph may have multiple
1-planar embeddings, and not all of them are necessarily triangulated
(for example $K_4$ has a triangulated drawing without crossing, or a
drawing with one crossing that is not triangulated.)

Let $G$ be a graph with a fixed 
(planar or 1-planar) drawing $\Gamma$.  A {\em separating cycle} of $G$ is
a cycle $C$ such that the curve traced by $C$ in $\Gamma$ does not self-intersect and has at least one vertex strictly inside and strictly outside.
It is easy to see that if the drawing is planar and triangulated, then
the graph is $k$-connected if and only if it has no separating
cycle of length at most $k{-}1$.  To our knowledge no equivalent
characterization is known for 1-planar drawing.  Obviously, if a 1-planar
drawing has a {\em separating uncrossed triangle} $T$, i.e., a separating
3-cycle where no edge is crossed, then the vertices of $T$ form a cutting
3-set and the graph is not 4-connected.
As part of our exploration it follows (see
Corollary~\ref{cor:4conn1planar}) that this exactly characterizes when
a 1-planar triangulated drawing represents a 4-connected graph. 

\section{Making 1-planar graphs planar}

It is quite obvious that any triangulated 1-plane graph can be made 
planar by deleting one of each pair of crossing edges.  Furthermore,
the resulting graph is triangulated.  We argue here that if we are
more carefully about which of the crossing edges is removed, then we
can additionally ensure that no uncrossed separating triangle is
created.

\begin{lemma}
\label{lem:removeEdge}
Let $G$ be a triangulated 1-plane graph without uncrossed separating triangle,
and assume that $G$ contains at least 6 vertices.
Then $G$ contains a spanning subgraph $G^-$ that is planar, 
triangulated, and has no separating triangle.
\end{lemma}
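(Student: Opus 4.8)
The plan is to delete exactly one edge from each crossing pair, chosen so that no uncrossed separating triangle survives. I would first set up the local picture of a single crossing: two edges $e=ac$ and $f=bd$ cross at a point $x$, and since $G$ is triangulated, the four corners around $x$ are closed off by the four uncrossed edges $ab,bc,cd,da$, forming a 4-cycle $a,b,c,d$ (the \emph{crossing quadrilateral}) that bounds the two crossed edges. Deleting one of $e,f$ leaves a planar, still-triangulated picture locally: removing $ac$ leaves the quadrilateral split by $bd$ into triangles $abd$ and $bcd$, and symmetrically for removing $bd$. So for each crossing pair, both choices keep triangulatedness; the freedom I have is which diagonal of each crossing quadrilateral to keep, and I must spend that freedom to avoid creating a separating triangle.

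Next I would understand how a separating triangle can arise. In $G$ there is no uncrossed separating triangle by hypothesis, and any crossed triangle is destroyed once we delete a crossing edge. The danger is that deleting a crossed edge turns a previously \emph{crossed} triangle into an uncrossed one that is now separating, or that a diagonal I keep together with two uncrossed edges forms a new separating triangle. The key observation is that a newly created uncrossed separating triangle must use at least one of the kept diagonals (otherwise it was already an uncrossed triangle in $G$, and being separating would contradict the hypothesis — here I would verify that a triangle of uncrossed edges present in $G$ that becomes separating in $G^-$ was already separating in $G$, since deleting edges only removes, never adds, vertices to either side). Thus any offending triangle in $G^-$ contains a kept diagonal, say $bd$, together with a third vertex $v$ adjacent to both $b$ and $d$ by uncrossed edges. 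I would then argue that such a configuration forces a small, very constrained local structure around the crossing quadrilateral $abcd$.

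The core of the argument, and the step I expect to be the main obstacle, is the global consistency of the choices: resolving one crossing in the way that kills a threatened triangle must not create a new threatened triangle at the same or an adjacent crossing. The clean way to handle this is to show that each crossing can be resolved \emph{independently} — i.e., for a fixed crossing quadrilateral $abcd$, at most one of the two diagonals $ac,bd$ can lie on a separating triangle, so the choice is forced but always available. To prove this I would suppose for contradiction that keeping $ac$ creates a separating triangle $acu$ and keeping $bd$ creates a separating triangle $bdv$, and then derive a contradiction from the planarity of the resolved drawing together with the fact that $abcd$ is a crossing quadrilateral: the vertices $u$ and $v$ and the separating triangles they form would have to interleave around the quadrilateral in a way that either violates planarity or exhibits an uncrossed separating triangle already present in $G$ (for instance, $abcd$ itself or a sub-triangle of it), contradicting the hypothesis. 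This is where the assumption of at least $6$ vertices should enter, to rule out the degenerate small cases (such as $K_5$ or $K_6$-like configurations) where both diagonals are simultaneously forced and no safe resolution exists.

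Finally, having shown that each crossing admits a safe resolution and that these resolutions are locally forced and mutually independent, I would assemble $G^-$ by applying the safe choice at every crossing simultaneously. The result is a spanning subgraph (we deleted only edges), it is planar (all crossings are gone), it remains triangulated (each resolution replaced a crossing quadrilateral by two triangles and left all other faces intact), and by construction it has no separating triangle. I would close by noting that $G^-$ is therefore $4$-connected, which is exactly the hypothesis needed to invoke Tutte's theorem downstream.
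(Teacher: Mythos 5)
Your local analysis of a single crossing is essentially the paper's key insight: at each crossing quadrilateral $\langle v_1,w_1,v_2,w_2\rangle$ at most one of the two crossing edges can be ``dangerous,'' and the proof of this goes through a Jordan-curve argument on the uncrossed triangle through the kept edge, with the $n\geq 6$ hypothesis used exactly where you place it, to exclude a $K_5$-like configuration in which all edges to the fifth vertex are uncrossed. However, your global assembly step has a genuine gap. You propose to classify each crossing independently (with respect to the crossing statuses in $G$) and then apply all the safe choices \emph{simultaneously}. The problem is that an uncrossed separating triangle of the final graph $G^-$ need not be detectable by any single per-crossing analysis: it can consist of \emph{two or three kept diagonals coming from different crossings} (plus possibly an edge uncrossed in $G$). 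When you analyze crossing $1$ and ask which triangles through its kept diagonal $bd$ have their other two edges uncrossed, the edge $b'd'$ kept at crossing $2$ is still crossed in $G$, so that triangle is not flagged as a threat; symmetrically it is not flagged at crossing $2$ either. If instead you interpret ``uncrossed'' as ``uncrossed in $G^-$,'' then the safety of the choice at crossing $1$ depends on the choice made at crossing $2$ and vice versa, and the claimed independence is exactly what needs proof. You acknowledge this is the main obstacle but the argument you sketch for it only treats the two diagonals of a \emph{single} quadrilateral, not interactions between distinct crossings.

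The paper avoids this entirely by inducting on the number of crossings: it resolves one crossing, verifies that the resulting graph is again a triangulated $1$-plane graph with no uncrossed separating triangle (here the check really is local, because only one edge changed its crossing status, so any newly uncrossed separating triangle must contain that specific edge), and then recurses. In the recursive calls the previously kept diagonals are now uncrossed edges, so triangles through them are automatically accounted for by the invariant. If you want to keep your one-shot formulation you would need to either prove a genuinely global statement about triangles spanning several crossings, or convert your argument into this sequential form — which is the natural repair and is what the paper does.
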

\begin{proof}
We proceed by induction on the number of crossings in $G$.  If there
is none, then $G$ itself is planar and triangulated.  Any separating
triangle $T$ of $G$ would be uncrossed by planarity, so none can exist
by assumption.

Now assume that $G$ has a crossing, say $(v_1,v_2)$ crosses $(w_1,w_2)$.  
Since $G$ is triangulated, the 4-cycle $\langle v_1,w_1,v_2,w_3\rangle$
also exists and consists of uncrossed edges.  See Figure~\ref{fig:removeEdge}.

\medskip\noindent{\bf Case 1:}
Assume first that for any vertex $v_3\neq v_1,v_2,w_1,w_2$, either $\{v_1,v_2,v_3\}$
	is not a triangle, or at least one of the edges $(v_1,v_3)$ and $(v_2,v_3)$ is crossed.  See Figure~\ref{fig:removeEdge}(a).

	In this case, delete edge $(w_1,w_2)$.  We claim that the resulting
	graph 1-plane $G'$ satisfies the conditions of the lemma, i.e., it is
	triangulated 1-plane and without uncrossed separating triangles.  
	The result then follows by induction since the spanning
	subgraph of $G'$ is also a spanning subgraph of $G$.

	Obviously $G'$ is 1-plane in
	the inherited embedding.  
	Deleting $(w_1,w_2)$ results in two regions 
	$\{v_1,v_2,w_1\}$ and $\{v_1,v_2,w_2\}$, while all other regions
	are unchanged, so $G'$ is again triangulated.
	Any
	separating triangle $T$ of $G'$ is also a separating triangle of $G$.  
	If $T$ were uncrossed in $G'$ but crossed in $G$, then $T$ must include
	edge $(v_1,v_2)$. So $T=\{v_1,v_2,v_3\}$ for some $v_3\neq v_1,v_2$,
	and also $v_3\neq w_1,w_2$ because $T$ is separating while 
	$\{v_1,v_2,w_1\}$ and
	$\{v_1,v_2,w_2\}$ bound regions in $G'$.
	But by case assumption one of $(v_1,v_3)$ and $(v_2,v_3)$ is crossed,
	contradicting that $T$ is uncrossed.

\medskip\noindent{\bf Case 2:}
Now assume that there exists a vertex $v_3\neq v_1,v_2,w_1,w_2$ for which
	edges $(v_1,v_3)$ and $(v_2,v_3)$ exist and are uncrossed in $G$.
	See Figure~\ref{fig:removeEdge}(b).

	Delete edge $(v_1,v_2)$ from the drawing and call the resulting
	1-plane graph $G'$.  We claim
	that $G'$ satisfies the conditions of the lemma; the result then follows
	by induction.  As before $G'$ is triangulated 1-plane.  Assume for
	contradiction that it contains a separating uncrossed triangle, which necessarily has the
	form $\{w_1,w_2,w_3\}$ for some vertex $w_3$, since $(w_1,w_2)$ is
	the only edge that was crossed in $G$ but uncrossed in $G'$.

	The drawing of $G$  contained triangle $T_v=\{v_1,v_2,v_3\}$,
	which forms a closed curve.  Edge $(w_1,w_2)$ intersects $T_v$ once
	at $(v_1,v_2)$ and cannot intersect it again by 1-planarity, hence
	(up to renaming) $w_1$ is inside $T_v$ while $w_2$ is outside.  But
	there is a path $\pi=\langle w_1,w_3,w_2\rangle$ from $w_1$ to $w_2$ in $G$,
	which also must cross $T_v$.  Since edge $(v_1,v_2)$ cannot be crossed
	again in a 1-planar graph, and edges $(v_1,v_3)$ and $(v_2,v_3)$ are
	uncrossed by case assumption, this implies that some vertex of $\pi$
	must be on $T_v$.  But none of $\{v_1,v_2,w_1,w_2\}$ coincide since they
	participate in a crossing.  We also know $w_3\neq v_1,v_2$ since
	$\{w_1,w_2,v_1\}$ and $\{w_1,w_2,v_2\}$ are regions of $G'$, not
	separating triangles.  Therefore
	we must have $w_3=v_3$.  

	But then $w_3=v_3$ is adjacent to all of $\{v_1,v_2,w_1,w_2\}$.
	Since $\{v_1,v_2,w_1,w_2\}$ form a $K_4$, this gives a $K_5$.
	See also Figure~\ref{fig:removeEdge}(c).
	Furthermore, all edges incident to $v_3=w_3$ in this $K_5$ are
	uncrossed in $G$, by case assumption and since $\{w_1,w_2,w_3\}$ 
	is an uncrossed triangle in $G'$.  
	We claim that this is impossible. 
	Namely, by $n\geq 6$ at least one region $R$ of this 
	$K_5$ contains additional vertices.  If $R$ is incident to $w_3$,
	then its boundary is an uncrossed triangle; this would make $R$
	an uncrossed separating triangle, a contradiction.  If $R$ is
	not incident to $w_3$, then it is incident to the crossing $c$
	of $(v_1,v_2)$ and $(w_1,w_2)$, and an edge $(a,b)$ of the
	uncrossed cycle $\langle v_1,w_1,v_2,w_3\rangle$.
	Since $(a,b)$ is uncrossed,
	and the part-edges from $a,b$ to $c$ cannot be crossed again,
	the vertices inside $R$
	can be adjacent only to $a,b$, making $\{a,b\}$ a cutting 2-set
	of $G$.  But one easily convinces oneself that a 
	triangulated simple 1-planar graph is 3-connected (for example because
	we can delete crossed edges until we obtain a triangulated planar graph
	as a spanning subgraph), so this is a contradiction.
\end{proof}

\begin{figure}[t]
\hspace*{\fill}
\subfigure[~]{\includegraphics[scale=1,page=1,trim=0 0 0 0,clip]{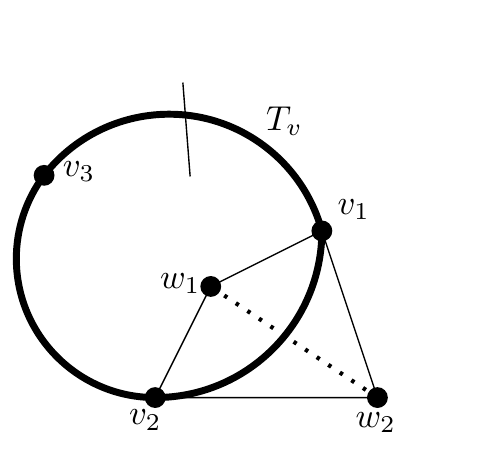}}
\hspace*{\fill}
\subfigure[~]{\includegraphics[scale=1,page=2,trim=0 0 0 0,clip]{tight.pdf}}
\hspace*{\fill}
\subfigure[~]{\includegraphics[scale=1,page=4,trim=0 0 0 0,clip]{tight.pdf}}
\hspace*{\fill}
\caption{
We can remove one of the crossing edges $(v_1,v_2)$ and $(w_1,w_2)$ without
creating a separating uncrossed triangle.  (a) Any triangle $T_v$ containing
$(v_1,v_2)$ has at least one other crossed edge.  (b) Some triangle $T_v$ 
containing $(v_1,v_2)$ has no other crossed edges. (c) $G$ cannot contain
a $K_5$ where all edges to $v_3$ are uncrossed.
}
\label{fig:removeEdge}
\end{figure}

This implies a few useful results.

\begin{corollary}
\label{cor:4conn1planar}
\label{cor:1planar4conn}
Let $G$ be a 1-plane triangulated graph with $n\geq 6$.  Then $G$ is 4-connected
if and only if $G$ contains no uncrossed separating triangle.
\end{corollary}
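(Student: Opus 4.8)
The plan is to prove both implications, with essentially all of the work outsourced to Lemma~\ref{lem:removeEdge}. The forward direction is already observed in the Background: if $G$ contains an uncrossed separating triangle $T$, then the three vertices of $T$ form a cutting $3$-set, so $G$ is not $4$-connected. Taking the contrapositive, $4$-connectivity of $G$ forbids any uncrossed separating triangle, and I would dispose of this direction in a single sentence.

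For the reverse direction, assume $G$ has no uncrossed separating triangle. Since $G$ is triangulated $1$-plane and $n\geq 6$, the hypotheses of Lemma~\ref{lem:removeEdge} are met, so $G$ contains a spanning subgraph $G^-$ that is planar, triangulated, and free of separating triangles. I would then invoke the characterization recalled in the Background: a triangulated \emph{planar} graph is $k$-connected if and only if it has no separating cycle of length at most $k-1$. For $k=4$ this says $G^-$ is $4$-connected iff it has no separating cycle of length at most $3$; in a simple graph such a cycle can only be a triangle, so the absence of separating triangles makes $G^-$ exactly $4$-connected.

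The last step is monotonicity of connectivity under edge addition. Because $G^-$ is a \emph{spanning} subgraph, $G$ and $G^-$ share the same vertex set and $G$ is obtained from $G^-$ by adding edges. Any cutting $3$-set of $G$ would also be a cutting $3$-set of $G^-$ (deleting the same vertices from the sparser graph can only leave it at least as disconnected), so from $G^-$ being $4$-connected we conclude $G$ has no cutting $3$-set; together with $n\geq 6\geq 4$ this gives that $G$ is $4$-connected, completing the equivalence.

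I do not expect a genuine obstacle here, since the substantive argument lives inside Lemma~\ref{lem:removeEdge}, which manufactures the planar $4$-connected certificate $G^-$. The only points that need a word of care are (i) remarking that in a simple graph a separating cycle of length at most $3$ is a triangle, so ``no separating triangle'' is precisely the planar $4$-connectivity condition for $G^-$, and (ii) recording that passing to a spanning supergraph preserves $k$-connectivity.
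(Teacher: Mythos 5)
Your proposal is correct and follows exactly the paper's route: the forward direction via the cutting $3$-set observation, and the reverse direction by invoking Lemma~\ref{lem:removeEdge} to obtain the planar triangulated spanning subgraph $G^-$ with no separating triangle, concluding $4$-connectivity of $G^-$ and hence of its spanning supergraph $G$. The only difference is that you spell out the two small bookkeeping steps (the planar characterization of $4$-connectivity and monotonicity under edge addition) that the paper leaves implicit.
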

\begin{proof}
If $G$ contains an uncrossed separating triangle $T$, then the vertices
of $T$ form a separating 3-set since none of the edges of $T$ are crossed.
Therefore $G$ is not 4-connected.

Now assume that $G$ contains no uncrossed separating triangle.  
By Lemma~\ref{lem:removeEdge}, we can find a subgraph  $G^-$ that
is planar and triangulated and has no separating triangle.  Therefore
$G^-$ (and with it its supergraph $G$) is 4-connected.
\end{proof}

\begin{theorem}
\label{thm:main}
Any 4-connected triangulated 1-plane graph $G$ has a Hamiltonian cycle.
\end{theorem}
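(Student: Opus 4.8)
The plan is to follow the detour advertised in the introduction: extract from $G$ a spanning subgraph that is planar, triangulated, and $4$-connected, and then invoke Tutte's theorem. First I would dispose of the small cases. Since $G$ is $4$-connected it has at least four vertices, and if $n\leq 5$ then $G$ must be complete: for any pair $u,v$ the remaining $n-2\leq 3$ vertices would form a cutting $3$-set unless $u,v$ are adjacent. Hence in this regime $G\in\{K_4,K_5\}$, which trivially has a Hamiltonian cycle, and I may assume $n\geq 6$ from here on.

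For $n\geq 6$ the reduction is driven entirely by the earlier results. Since $G$ is $4$-connected and triangulated, Corollary~\ref{cor:1planar4conn} tells me that $G$ contains no uncrossed separating triangle. This is precisely the hypothesis of Lemma~\ref{lem:removeEdge}, so I obtain a spanning subgraph $G^-$ of $G$ that is planar, triangulated, and has no separating triangle.

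It remains to check that $G^-$ is $4$-connected and to transfer Hamiltonicity back. By the characterization recalled in the Background section, a planar triangulated graph is $k$-connected if and only if it has no separating cycle of length at most $k-1$; for $k=4$ the only separating cycles of length $\leq 3$ in a simple triangulated graph are separating triangles, of which $G^-$ has none. Thus $G^-$ is a $4$-connected planar triangulated graph, and Tutte's theorem (indeed already Whitney's result, since $G^-$ is triangulated) furnishes a Hamiltonian cycle $C$ of $G^-$. Because $G^-$ is a \emph{spanning} subgraph of $G$, the cycle $C$ passes through every vertex of $G$ and uses only edges present in $G$, so $C$ is a Hamiltonian cycle of $G$ as well.

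I do not anticipate a real obstacle, since Lemma~\ref{lem:removeEdge} already carries the combinatorial weight; the only point demanding care is the base case. The graph $K_5$ is exactly the reason the lemma requires $n\geq 6$: it is non-planar, and each of its maximal planar subgraphs has a vertex of degree $3$ and hence a separating triangle, so $K_5$ contains \emph{no} $4$-connected planar triangulated subgraph at all. Consequently $K_5$ (and likewise $K_4$) cannot be handled by the reduction and must instead be verified by direct inspection, which is immediate.
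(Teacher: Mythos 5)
Your proposal is correct and follows the paper's own route exactly: handle $n\leq 5$ by inspection, then for $n\geq 6$ use the absence of uncrossed separating triangles to invoke Lemma~\ref{lem:removeEdge}, observe that the resulting planar triangulated subgraph without separating triangles is $4$-connected, and apply Tutte's theorem to the spanning subgraph. Your treatment of the base case is in fact slightly cleaner than the paper's (which also lists $K_5\setminus e$, a graph that is only $3$-connected), and your side remark on why $K_5$ forces the $n\geq 6$ hypothesis is accurate.
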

\begin{proof}
The claim clearly holds for $n\leq 5$, because there are only three
4-connected triangulated 1-plane graphs with $n\leq 5$  ($K_4$, $K_5$ and
$K_5\setminus e$) and all three have Hamiltonian cycles.  For $n\geq 6$,
a 4-connected triangulated 1-plane graph $G$ has no separating
uncrossed triangle, so by Lemma~\ref{lem:removeEdge} we can find a subgraph $G^-$ 
that is planar, triangulated and has no separating triangle.  Graph $G^-$ is 4-connected
and planar, so we can find a Hamiltonian cycle in $G^-$ (hence also $G$).
\end{proof}

A 1-planar graph is called an {\em optimal 1-planar graph} if it
has $4n-8$ edges (the maximum possible number in a 1-planar graph).
It is known that every optimal 1-planar graph is 4-connected and
triangulated \cite{Schumacher} and hence we have:

\begin{corollary}
Every optimal 1-planar graph has a Hamiltonian cycle.
\end{corollary}

\subsection{Linear run-time}

Finding a Hamiltonian cycle in a planar triangulated graph can be
done in linear time \cite{AKS84}.  To find the Hamiltonian cycle 
of Theorem~\ref{thm:main} likewise in linear time,
we therefore must argue that the
proof of Lemma~\ref{lem:removeEdge} gives rise to an algorithm
that runs in linear time.

Fix a 1-planar graph $G$.  As a first step, we compute all triangles
in $G$, using the algorithm by Chiba and Nishizeki \cite{Chiba85b}.
To understand its run-time, we need a minor detour.
Define the {\em arboricity} $a(G)$ to be the smallest number
$k$ such that the edges of $G$ can be partitioned into $k$ sets, each of which
forms a forest.  By the Nash-Williams formula \cite{NW61} we know that
$a(G)=\max_H \left\lceil |E(H)| / (|V(H)|-1) \right \rceil$, where the maximum goes over all subgraphs $H$
of $G$ with at least $2$ vertices.  It is known that an $n$-vertex 1-planar
simple graph has at most $4n-8$ edges, except for $n=2$ where it has at most $4n-7$ edges.
Therefore the arboricity of $G$ is at most 4.  The triangle-finding algorithm by Chiba
and Nishizeki has run-time $O(a(G) n)$, which therefore is $O(n)$.  (This in particular
implies that there is only a linear number of triangles.)

We store a list $\calT$ of triangles, and cross-link all triangles to the edges that
contain them; this takes time $O(|\calT|)=O(n)$.  Within the same time we can also
build for each edge $e$ a list $\calT_e$ of triangles that contain $e$. 
We also assume that from the 1-planar embedding
we have a list of all crossings that are cross-linked to the two crossing edges. 
Now we parse all crossings.  For each crossing $c$,
look up the edges $e,e'$ involved in it.  After possible renaming, $|\calT_e|\leq |\calT_{e'}|$.
For each triangle $T \in \calT_e$, check whether any of the other edges of $T$ are crossed.
If not, then check whether $T$ consists of $e$ and one endpoint of $e'$.
If not, then we have found a separating uncrossed triangle and we delete $e'$ and are done parsing $c$.
Otherwise (if none of the triangles in $\calT_e$ are separating uncrossed) we
delete $e$.  The run-time to determine this choice is $O(1+|\calT_e|)$.

Deletion of one edge $\hat{e}\in \{e,e'\}$ consists of updating that the other edge 
is now uncrossed, and deleting all triangles in $\calT_{\hat{e}}$.
This takes time $O(1+|\calT_{\hat{e}}|)$.  Since $|\calT_e|\leq |\calT_{\hat{e}}|$,
the entire run-time for removing one
crossing is proportional to the number of deleted triangles plus one. This is overall
linear time since there are $O(n)$ triangles and $O(n)$ crossings.
We conclude:

\begin{theorem}
Given a 4-connected triangulated 1-planar graph $G$, we can find a
4-connected triangulated planar spanning subgraph of $G$ in linear time.
\end{theorem}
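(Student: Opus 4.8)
The plan is to separate the two assertions of the theorem --- existence of a $4$-connected triangulated planar spanning subgraph and its computation in linear time --- and to obtain the first essentially for free from the earlier results. We may assume $n\geq 6$, as required by Lemma~\ref{lem:removeEdge}; that lemma already produces a spanning subgraph $G^-$ that is planar, triangulated, and free of separating triangles, and since a triangulated plane graph is $4$-connected exactly when it has no separating $3$-cycle (as used for Corollary~\ref{cor:4conn1planar}), this $G^-$ is precisely what we want. The real work is therefore to convert the \emph{inductive} construction in the proof of Lemma~\ref{lem:removeEdge} into a single-pass algorithm over the crossings and to bound its running time.

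First I would argue that one pass over the crossings suffices. In a good $1$-planar drawing every edge is crossed at most once, so when a crossing between edges $e$ and $e'$ is resolved by deleting one of them, the only edge whose crossed/uncrossed status changes is the surviving partner, and that edge belongs to no other crossing. Hence resolving crossings one after another never disturbs the data needed to handle a future crossing beyond the per-edge ``crossed'' flags, which I maintain and update on each deletion. At each crossing I would decide, using the \emph{current} flags, which of Case~1 or Case~2 of Lemma~\ref{lem:removeEdge} applies and delete the corresponding crossing edge; the lemma's invariant (triangulated, $1$-plane, no uncrossed separating triangle) is preserved after every step, so the final graph is the sought $G^-$. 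Checking that the current flags carry enough information to classify each crossing correctly --- i.e. that this greedy pass is a faithful realization of the recursion --- is the delicate correctness point.

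For the running time I would proceed in the order already set up: enumerate all triangles with the Chiba--Nishizeki algorithm in time $O(a(G)\,n)$, which is $O(n)$ since the $4n-8$ edge bound together with the Nash--Williams formula gives $a(G)\leq 4$; then in $O(|\calT|)=O(n)$ time build the cross-links between triangles and edges and the per-edge lists $\calT_e$. Processing a crossing between $e$ and $e'$ with $|\calT_e|\leq|\calT_{e'}|$ costs $O(1+|\calT_e|)$ to scan the smaller list and apply the dichotomy, plus $O(1+|\calT_{\hat e}|)$ to delete the chosen edge $\hat e$ with its incident triangles. The main obstacle, and the crux of the theorem, is the amortized bound: I must note that $|\calT_e|\leq|\calT_{\hat e}|$ (trivially if $\hat e=e$, and by the renaming if $\hat e=e'$), so that the scanning cost is dominated by the deletion cost, and that each triangle is deleted at most once over the whole run. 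With $O(n)$ triangles and $O(n)$ crossings this yields total time $O(n)$. The points that require care are precisely that the smaller-list heuristic makes each per-crossing cost chargeable to deleted triangles, and that the single pass with updated flags reproduces the inductive deletions of Lemma~\ref{lem:removeEdge}.
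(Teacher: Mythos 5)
Your proposal matches the paper's proof in all essentials: enumerate triangles via Chiba--Nishizeki using the arboricity bound $a(G)\le 4$ from Nash--Williams and the $4n-8$ edge bound, cross-link triangles to edges, process each crossing by scanning the smaller list $\calT_e$ to decide which case of Lemma~\ref{lem:removeEdge} applies, and charge the cost to the triangles deleted with the removed edge so that the total over all $O(n)$ crossings and $O(n)$ triangles is linear. Your added remark that a single pass suffices because each edge participates in at most one crossing (so only the surviving partner's flag ever changes) is a correct and slightly more explicit justification of the same greedy realization the paper uses.
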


\begin{corollary}
Every 4-connected triangulated 1-planar graph $G$ has a Hamiltonian
cycle, and it can be found in linear time.
\end{corollary}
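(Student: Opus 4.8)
The plan is to combine the two previous results—the structural Theorem establishing that a $4$-connected triangulated $1$-planar graph $G$ contains a $4$-connected triangulated \emph{planar} spanning subgraph $G^-$ computable in linear time, and the classical fact that Hamiltonian cycles in triangulated planar graphs can be found in linear time. The corollary is essentially a bookkeeping statement that chains these two linear-time procedures together, so the proof will be short.

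First I would invoke the preceding Theorem to obtain, in $O(n)$ time, a spanning subgraph $G^-$ of $G$ that is planar, triangulated, and $4$-connected. Because $G^-$ is a \emph{spanning} subgraph, it has the same vertex set as $G$, so any Hamiltonian cycle of $G^-$ is immediately a Hamiltonian cycle of $G$ (the edges of $G^-$ are a subset of the edges of $G$). This observation is what lets the two steps compose without any loss. Next I would apply the linear-time algorithm of Aggarwal, Klawe, and Shor~\cite{AKS84} (as cited earlier for triangulated planar graphs) to compute a Hamiltonian cycle of $G^-$ in $O(n)$ time; existence of such a cycle is already guaranteed by Tutte's theorem since $G^-$ is $4$-connected planar, and the cited algorithm additionally produces it constructively.

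Finally I would observe that the total running time is the sum of the two $O(n)$ phases, which is again $O(n)$, and that the cycle returned visits every vertex of $G^-$—hence every vertex of $G$—exactly once. The only small subtlety worth a sentence is the base case: for $n\leq 5$ the reduction Theorem does not directly apply (it assumed $n\geq 6$), but there are only finitely many such graphs and each is trivially handled in constant time as already noted in the proof of Theorem~\ref{thm:main}. I do not anticipate any genuine obstacle here, since all the substantive work—both the structural reduction and its linear-time implementation—has been carried out in the earlier theorems; the corollary merely packages the existence result of Theorem~\ref{thm:main} together with the constructive linear-time guarantee.
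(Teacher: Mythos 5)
Your proposal is correct and matches the paper's (implicit) argument exactly: the corollary is obtained by chaining the linear-time construction of the 4-connected triangulated planar spanning subgraph $G^-$ with the linear-time Hamiltonian-cycle algorithm of \cite{AKS84}, noting that a Hamiltonian cycle of the spanning subgraph is one of $G$. Your remark about handling $n\leq 5$ separately is a sensible (and harmless) addition that the paper covers in the proof of Theorem~\ref{thm:main}.
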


\section{Non-hamiltonian 1-planar graphs}

In this section, we exhibit some 1-planar graphs that
do not have a Hamiltonian path, and in fact, do not even have a
near-perfect matching.  

\subsection{Could Theorem~\ref{thm:main} be made stronger?}

Theorem~\ref{thm:main} requires two assumptions on the 1-planar graph: It must be
triangulated and it must be 4-connected.  The latter is obviously required; one
can easily construct planar (hence 1-planar) triangulated graphs that have no Hamiltonian path
(and indeed, no near-perfect matching).  In fact, there are even non-Hamiltonian
1-plane graphs that are {\em maximal}:
we cannot add any edge to it without destroying 1-planarity or simplicity.

\begin{lemma}
\label{lem:max1p}
For any $N$,
there exists a simple maximal 1-plane graph $G$ with $n\geq N$ vertices that
has no Hamiltonian path.
\end{lemma}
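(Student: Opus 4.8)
The plan is to rule out a Hamiltonian path by a counting obstruction and to obtain maximality ``for free'' by sealing every region against the insertion of further edges. Recall the standard fact that if $G$ has a Hamiltonian path, then for every vertex set $S$ the graph $G\setminus S$ has at most $|S|+1$ connected components (after the $|S|$ vertices of $S$ are deleted, the path falls into at most $|S|+1$ subpaths). Equivalently I could aim for the stronger conclusion that $G$ has no near-perfect matching, which by the Tutte--Berge formula holds as soon as some $S$ leaves at least $|S|+2$ odd components; since a Hamiltonian path yields a near-perfect matching, this also suffices. So the task reduces to building, for every $N$, a maximal $1$-plane graph on at least $N$ vertices that contains a set $S$ with at least $|S|+2$ components in $G\setminus S$.

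For the construction I would start from a small ``frame'' $1$-plane graph with a distinguished separating set $S$ and hang several vertex-disjoint blobs $B_1,\dots,B_t$ off $S$, so that in $G\setminus S$ each $B_i$ is its own component; taking $t\ge |S|+2$ already forbids a Hamiltonian path. To reach $n\ge N$ I would not attach ever more blobs to a fixed $S$ (this is exactly what maximality fights against), but instead inflate a single blob: replace one of its vertices by a kite (a $K_4$ drawn with one crossing) and iterate. A kite is internally complete and its four boundary edges each have the central crossing as the apex of an incident triangular region, so substituting kites keeps the drawing triangulated, keeps the blob connected as a single component, and leaves $S$ and the component count unchanged while $n$ grows without bound.

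The point that makes the graph maximal is that every region is sealed. Concretely I would arrange that each uncrossed edge $e$ of $G$ has, on both of its sides, a triangular region whose third corner is a crossing rather than a vertex; then no new edge can cross $e$, and no uncrossed edge can be inserted into a triangular region. The blobs are attached to $S$ by crossing edges chosen so that each such edge joins a blob vertex to a vertex of $S$ and never to another blob, which simultaneously fills the faces between consecutive blobs and keeps $S$ a genuine separator. Non-Hamiltonicity then follows immediately from the counting obstruction, since deleting $S$ still leaves the $t$ blobs as separate components.

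The main obstacle is precisely this maximality check, and it is where the tension of the statement lives. Unlike planarity, a triangulated $1$-plane drawing is \emph{not} automatically maximal: across any uncrossed edge $e=pq$ whose two incident triangles have vertex apexes $a,b$, one may add the crossing edge $ab$ unless $a$ and $b$ are already adjacent. Thus I must verify, region by region and for every pair of non-adjacent vertices, that the only way to route a connecting edge would either duplicate an existing edge or force a second crossing on an already-crossed edge. The delicate case is the face ``between'' two consecutive blobs: separating components by a small cut inevitably creates large faces that invite a cross-blob chord, so the crux of the proof is to show that the sealing crossings (all pinned to $S$) really do block every such chord without reconnecting any two blobs. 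Establishing this blocking — and hence that no edge whatsoever can be added — is the heart of the argument, whereas the counting that defeats the Hamiltonian path is comparatively routine.
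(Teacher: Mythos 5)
Your reduction of the problem is sound and matches the paper's: a Hamiltonian path gives a near-perfect matching, and the easy direction of the Tutte--Berge formula (or the elementary component count) kills both once some set $S$ leaves at least $|S|+2$ odd components of $G\setminus S$. That part is routine, as you say. The gap is that the rest of the argument is a plan rather than a proof: you never specify the ``frame'', the ``blobs'', or how the crossing edges attaching blobs to $S$ are actually drawn, and you explicitly defer the maximality verification --- which you yourself identify as ``the heart of the argument'' --- to a region-by-region check that is never carried out. For a lemma whose entire content is the coexistence of maximality with a Tutte--Berge obstruction, leaving the maximality check as an acknowledged obstacle means the statement is not proved. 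Your kite-inflation step has the same problem: replacing a vertex of degree $d$ by a $K_4$ drawn with one crossing requires redistributing the $d$ incident edges among the four new vertices while preserving 1-planarity, simplicity, triangulation, \emph{and} maximality, and none of this is checked.

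For comparison, the paper (following Wittnebel) takes a concrete route that makes the maximality check almost free. Let $H$ be the double wheel on $h$ vertices (an $(h-2)$-cycle plus two apexes joined to everything); all degrees are even, so the faces 2-color into black and white. Fill every white face with a $K_4$ and every black face with a $K_6$, drawn so that every region of the resulting drawing is either a triangle with a crossing corner or one of the few uncrossed triangles inside the $K_4$'s. Maximality then reduces to one observation: a new edge could only be added across a pair of adjacent uncrossed regions, these occur only at the $K_4$'s, and any edge added there would be a double edge. The obstruction set is $S=V(H)$ with $2h-4$ odd components in $G\setminus S$, giving at least $h-4$ unmatched vertices. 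If you want to salvage your approach, you should either adopt a construction of this flavor --- where \emph{every} uncrossed edge is flanked on both sides by crossing-cornered triangles by design, not by stipulation --- or actually exhibit your frame-and-blobs drawing explicitly and enumerate the regions between consecutive blobs to rule out cross-blob chords.
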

\begin{proof}
The following construction was given by Wittnebel \cite{BW19} and is shown in
Figure~\ref{fig:tight}(a);    we repeat it here for completeness.
Set $h\geq \max\{6,\frac{N+8}{5}\}$ to be an even number, and let $H$ be the $h$-vertex
graph that consists of an $(h-2)$-cycle $C$ plus two more vertices adjacent to all vertices
of $C$.  Note that $H$ has even vertex-degrees, hence its dual graph is bipartite
and the $2h-4$ faces can be coloured as $h-2$ ``black'' faces and $h-2$ ``white''
faces such that no two faces of the same colour share an edge.

To obtain $G$, we replace all white faces by a $K_4$ and all black 
faces by a $K_6$.    The resulting graph $G$ has $h+(h-2)+3(h-2)=5h-8\geq N$
vertices, and Figure~\ref{fig:tight}(a) shows that it is 1-planar.
One easily verifies that it is maximal:  We could add an edge while staying
1-planar only within two adjacent uncrossed regions. This exists only at the $K_4$'s, 
and the edges we could add there would be double edges.

Observe that $G\setminus V(H)$ has $2h-4$ components (one per face of $H$)
that all have odd size, and we removed $h$ vertices.
By the ``easy'' part of the Tutte-Berge formula \cite{Berge1958}
any matching of $G$ has at least $h-4$ unmatched vertices.
By $h\geq 6$ hence $G$ has no near-perfect matching and no
Hamiltonian cycle.
\end{proof}

\begin{figure}[t]
\hspace*{\fill}
\subfigure[~]{\includegraphics[scale=1.2,page=3,trim=0 0 0 0,clip]{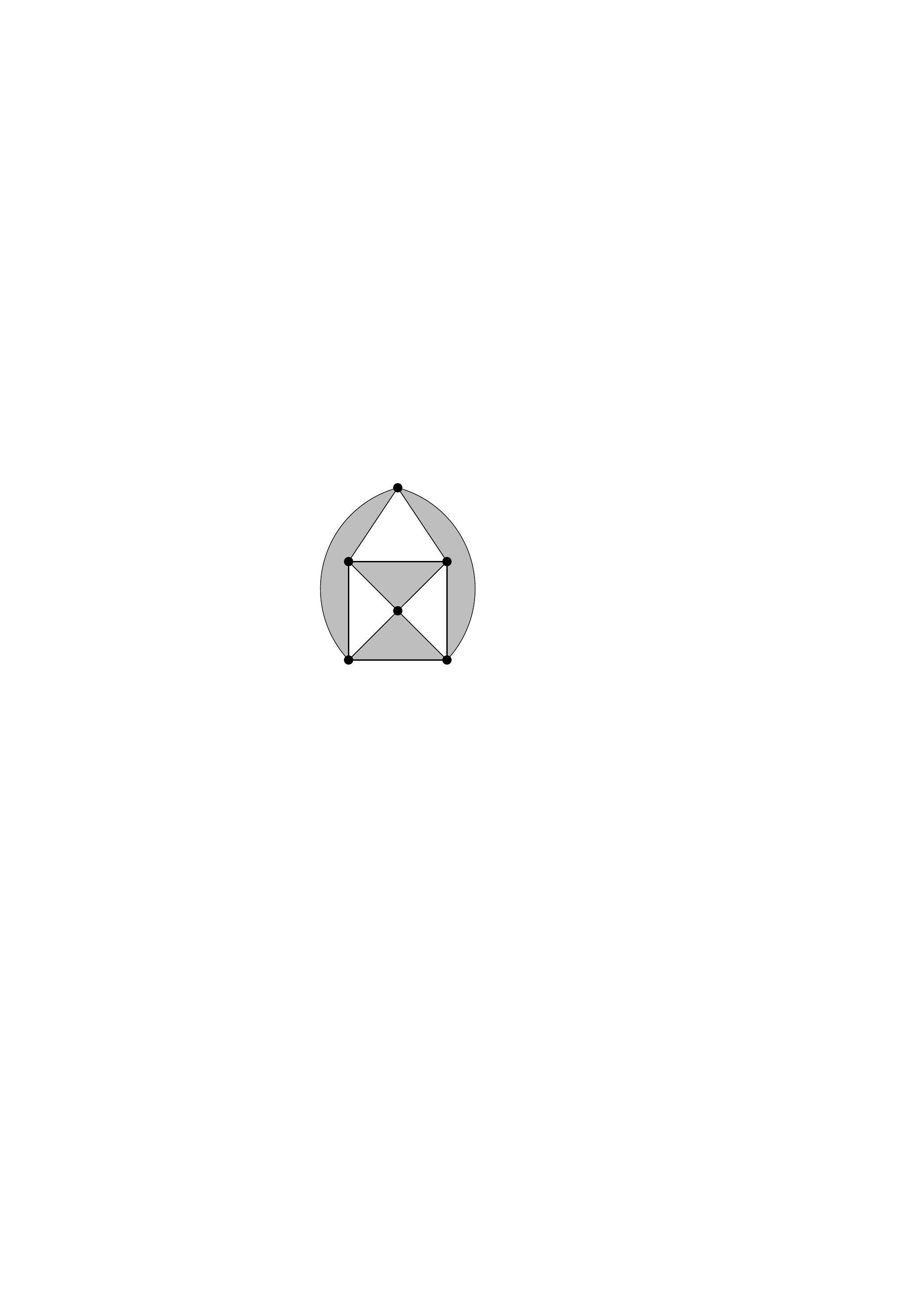}}
\hspace*{\fill}
\subfigure[~]{\includegraphics[scale=1.2,page=3,trim=0 0 0 0,clip]{tight.pdf}}
\hspace*{\fill}
\caption{
Two 1-planar graphs that do not have near-perfect matchings or a Hamiltonian
path.
(a) A triangulated 1-plane graph. 
(b) A 4-connected 1-planar graph. 
}
\label{fig:tight}
\end{figure}

It is less obvious that the ``triangulated'' assumption of Theorem~\ref{thm:main}
is also required.  For planar graphs, 4-connectivity alone is enough to guarantee
the existence of a Hamiltonian cycle, but the situation is different for 1-planar
graphs.

\begin{lemma}
\label{lem:mindeg4}
For any $N$,
there exists a 4-connected 1-planar graph with $n\geq N$ vertices that
has no Hamiltonian path.  In particular, any matching
has size at most $\frac{n+4}{3}$.
\end{lemma}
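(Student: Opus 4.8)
The plan is to reuse the Tutte--Berge strategy of Lemma~\ref{lem:max1p}: I would exhibit a graph $G$ together with a small vertex set $S$ whose deletion leaves many odd components. By the easy direction of the Tutte--Berge formula \cite{Berge1958}, every matching then leaves at least $\mathrm{odd}(G\setminus S)-|S|$ vertices unmatched, so if this quantity is a positive fraction of $n$ the maximum matching is bounded well below $n/2$. Since a Hamiltonian path would yield a matching of size $\lfloor n/2\rfloor$, a sufficiently large such $G$ can have no Hamiltonian path. The new difficulty compared with Lemma~\ref{lem:max1p} is that the graph must now be $4$-connected, so no inserted vertex may have degree~$3$; this is exactly where $1$-planarity is used, and the resulting construction is the one sketched in Figure~\ref{fig:tight}(b).

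Concretely, I would start from a frame graph $H$ (a $4$-connected planar triangulation, or a bipyramid-type graph as in Lemma~\ref{lem:max1p}) and insert a gadget into each region. Each inserted vertex is given degree~$4$, and $S$ is taken to be $V(H)$. The key design requirement is twofold: (i) after deleting $S$ the inserted vertices must split into many odd components, so that $\mathrm{odd}(G\setminus S)-|S|$ grows linearly in $n$; and (ii) the frame vertices bounding a region must \emph{not} form a cutting set, so that $G$ is $4$-connected. Requirement~(i) pulls the inserted vertices toward being isolated (all neighbours in $S$), while requirement~(ii) forces each inserted vertex to have a neighbour \emph{outside} the region that contains it. A vertex placed in a triangular region has only three frame corners available, so it cannot reach a fourth neighbour without an edge that crosses a frame edge; thus $1$-planarity is unavoidable, and the crossing edges are precisely what both raise the degree to~$4$ and prevent the three corners from disconnecting the vertex. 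Once the construction is fixed, the bound follows by bookkeeping: counting $n$, $|S|$ and the number of odd components of $G\setminus S$ and substituting into the Tutte--Berge inequality yields that every matching has size at most $\frac{n+4}{3}$, which is below $\lfloor n/2\rfloor$ for large $n$ and therefore rules out a Hamiltonian path.

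Two routine verifications remain. For $1$-planarity I would check that each frame edge is crossed at most once and that the crossing edges of different gadgets do not cross one another; this is ensured by routing each crossing edge to stay within the two regions adjacent to the frame edge it crosses. For $4$-connectivity I would argue that no three vertices disconnect $G$: deleting at most three frame vertices keeps $H$ connected (as $H$ is $4$-connected) and every inserted vertex retains a frame neighbour, while deleting an inserted vertex is harmless since those vertices carry no common private structure. The main obstacle is reconciling requirements~(i) and~(ii) simultaneously: the crossing edges mandated by $4$-connectivity tend to join inserted vertices to one another, which would merge the odd components and destroy the matching bound. The heart of the construction---and what Figure~\ref{fig:tight}(b) realises---is to orient and route these mandatory crossing edges so that the inserted vertices still break into enough small odd components after $S$ is removed, all while keeping every edge crossed at most once.
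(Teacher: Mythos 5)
Your overall strategy---exhibit a set $S$ whose deletion leaves many odd components, apply the easy direction of Tutte--Berge, and conclude there is no Hamiltonian path---is exactly the paper's strategy, and your counting template is right. But the actual construction, which is the entire content of the lemma, is missing: you explicitly defer it (``the heart of the construction\dots is to orient and route these mandatory crossing edges so that\dots''), and the route you sketch toward it runs into a tension you name but do not resolve. That tension is an artifact of a wrong design choice. You assume the frame $H$ is a triangulation, so that an inserted vertex sees only three corners and must send its fourth edge \emph{out} of its region, across a frame edge, toward either another frame vertex or another inserted vertex; the latter merges your odd components and the former is hard to route with only one crossing per edge. You also motivate this with the claim that the corners of a region ``must not form a cutting set,'' which is a red herring: a cutting set of size four is perfectly compatible with 4-connectivity, so there is no need for any inserted vertex to have a neighbour outside its region---provided the region has four corners.

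That is precisely how the paper escapes: it takes $H$ to be a planar \emph{quadrangulation} (every face a 4-cycle, $h\geq\max\{4,\frac{N+4}{3}\}$ vertices) and \emph{double-stellates} each face, inserting two independent vertices each joined to all four corners of that face. Every inserted vertex has degree 4 with all neighbours in $V(H)$, so $G\setminus V(H)$ consists of $2(h-2)$ singletons and Tutte--Berge gives $|M|\le h=\frac{n+4}{3}$ with $n=3h-4$; the only crossings are between the stellation edges of the two vertices sharing a face, so 1-planarity is immediate. For 4-connectivity the paper notes that $H$ is bipartite with no triangles, hence the single stellation $H_s$ is a triangulated planar graph with no separating triangle and is therefore 4-connected; two vertices inserted into the same face share four common neighbours, and any other pair already has four disjoint paths inside a copy of $H_s\subseteq G$. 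None of this is recoverable from your sketch, because your premise that crossings between gadget edges and frame edges are ``mandated by 4-connectivity'' is false for a quadrangular frame, and without the correct frame the bookkeeping that produces $\frac{n+4}{3}$ cannot be carried out.
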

\begin{proof}
Consider the graph in Figure~\ref{fig:tight}(b), which has been built as follows.
Start with a simple planar graph $H$ where every face is a 4-cycle, and 
where $h:=|V(H)| \geq \max\{4,\frac{N+4}{3}\}$.
Let $H_s$ be the graph obtained by {\em stellating} every face of $H$, i.e., by inserting
into every face of $H$ a new vertex that is adjacent to all vertices of the face.
Let $G$ be the graph obtained by {\em double-stellating} every face of $H$, i.e., by inserting
into every face of $H$ {\em two} new vertices that are adjacent to all vertices of the face.

Figure~\ref{fig:tight}(b) shows that $G$ is 1-planar.
Also, $H$ has $h{-}2$ faces, hence $G$ has $n=h+2(h{-}2)=3h-4\geq N$ vertices.
Observe that $G\setminus V(H)$ has
$2(h{-}2)$ components (two per face of $H$) that are singleton vertices.  
Again using the easy part of the Tutte-Berge formula \cite{Berge1958},
any matching $M$ in $G$ leaves at least $h-4$ unmatched vertices.  So there are at
most $2h$ matched vertices and $|M|\leq h=\frac{n+4}{3}$.

It remains to argue that $G$ is 4-connected.  Observe first that $H$ is bipartite
and has no triangle, in particular it has no separating triangle.  Also, since all
its faces are 4-cyles and it is bipartite, no two non-consecutive vertices on a
face of $H$ are adjacent.  Therefore stellating $H$ does not create a
separating triangle, and it makes the graph triangulated, which means that $H_s$ is 4-connected.
To show that $G$ is
4-connected it suffices to argue that there are four interior vertex-disjoint paths from
$v$ to $w$  for any two vertices $v,w$ in $G$. 
If $v$ and $w$ have been inserted into the same face of $H$, then this
is obvious, since they have four common neighbours.  Otherwise, $v$ and $w$ existed
also as vertices in $H_s$, and we can find four paths between them
already in $H_s$.
\end{proof}

\subsection{A 5-connected non-Hamiltonian 1-planar graph}

We already exhibited a 4-connected 1-planar graph that does not
have a near-perfect matching (and hence no Hamiltonian path).  We
now show an example that shows that even 5-connectivity is not
enough (though the graph is very close to having a perfect matching).

\begin{lemma}
\label{lem:mindeg5}
For any $N$,
there exists a 5-connected 1-planar graph with $n\geq N$ vertices for which any matching has size at most $\frac{n-2}{2}$.
\end{lemma}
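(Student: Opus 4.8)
The goal is to build a 5-connected 1-planar graph whose components outside a small vertex set force many unmatched vertices. Looking at the pattern of the previous two lemmas, the natural plan is to mimic the construction of Lemma~\ref{lem:mindeg4} but push the connectivity up from 4 to 5 while keeping enough odd components in $G \setminus V(H)$ to obstruct a near-perfect matching. The strategy is to start from a suitable planar ``frame'' graph $H$ and insert vertices into its faces in a 1-planar way, then apply the easy direction of the Tutte--Berge formula to bound the matching size.

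\medskip\noindent First I would choose the frame $H$ so that its faces have enough sides to permit each inserted gadget to have minimum degree $5$ while remaining 1-planar. Whereas Lemma~\ref{lem:mindeg4} used 4-faces with double-stellation, here I expect to need faces of length $5$ (or a mix allowing degree-5 insertions), so the plan is to take $H$ to be a planar graph, bipartite or at least triangle-free, in which every face is a pentagon (for instance a suitable quadrangulation-like or fullerene-type structure, or simply a large planar graph all of whose faces are $5$-cycles). Into each face I insert a constant number of new vertices, drawn with exactly one crossing per edge, so that each inserted vertex sees enough face-boundary vertices to have degree at least $5$. As in the earlier lemmas, the inserted vertices in $G \setminus V(H)$ form the odd components: if each face contributes an odd-sized component (ideally a single vertex or a small odd clique), and $H$ has about $cn$ faces while $|V(H)|$ is comparatively small, then deleting $V(H)$ disconnects $G$ into roughly $cn$ odd components, forcing at least $|V(H) \text{-components}|$ unmatched vertices and yielding $|M| \le \frac{n-2}{2}$.

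\medskip\noindent The second ingredient is the connectivity argument, which I would run exactly as in Lemma~\ref{lem:mindeg4}. The plan is to verify that the intermediate ``stellated'' graph $H_s$ (obtained by inserting a single apex into each face) is already $5$-connected: since $H$ is triangle-free with no separating triangle, and faces are pentagons, stellating should make the graph triangulated with no short separating cycle, and I would argue $5$-connectivity either directly via Menger (exhibiting five interior-vertex-disjoint paths between any $v,w$) or via the triangulated-planar characterization that $k$-connectivity is equivalent to the absence of a separating cycle of length at most $k-1$. Then, to lift $5$-connectivity from $H_s$ to $G$, I would reuse the two-case dichotomy: two inserted vertices sharing a face have at least five common neighbours, and vertices already present in $H_s$ keep their five paths. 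The subtlety is that inserting \emph{two} (or more) vertices per face must not introduce a separating $4$-cycle among the old vertices, so I would need the face length and the adjacency pattern of the inserted gadget to be chosen so that no small separator appears.

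\medskip\noindent The main obstacle I anticipate is the tension between three competing requirements: $5$-connectivity demands every vertex have degree at least $5$ and rules out separating $4$-cycles, which pushes toward larger faces and denser gadgets; $1$-planarity caps how many crossing edges can be packed into a face (at most one crossing per edge severely limits how many degree-$5$ vertices fit); and the matching bound wants \emph{many} odd components, i.e.\ many faces relative to $|V(H)|$, which pushes toward a graph that is as ``face-heavy'' as possible. Balancing these so that the inserted components are odd, have the right degrees, and the deleted set $V(H)$ is small enough to give the stated bound $\frac{n-2}{2}$ is the delicate part; getting the exact constant (two unmatched vertices rather than a larger slack) will require the face-count and component-count arithmetic to line up precisely, much as the $h-4$ versus $h$ bookkeeping did in the previous two proofs.
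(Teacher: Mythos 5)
Your proposal is a strategy outline rather than a proof: the two points you yourself flag as ``the delicate part'' --- finding a degree-$\ge 5$ gadget that fits 1-planarly into a face, and making the component/vertex arithmetic yield deficiency $2$ --- are exactly where the construction lives, and they are not resolved. Worse, the specific route you sketch runs into quantitative trouble. If $H$ is planar with every face a pentagon, Euler's formula gives $f=\tfrac{2}{3}(h-2)<h$ faces, so inserting one singleton per face produces \emph{fewer} odd components than the $h$ deleted vertices and the Tutte--Berge bound yields nothing; you would need at least two independent inserted vertices per face, each adjacent to all five corners to reach degree $5$, and that gadget (a pentagon plus two independent apexes joined to all five corners, drawn inside the face with at most one crossing per edge) does not obviously exist --- the natural drawings force some edge of the second apex to cross two edges of the first. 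You offer no alternative gadget, and no argument that the final graph has no cutting $4$-set. Note also that aiming for ``roughly $cn$ odd components'' overshoots what the lemma needs (deficiency $2$) and collides with the paper's own conjecture that $5$-connected 1-planar graphs have matchings of size $\tfrac{n}{2}-O(1)$.

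The paper's construction is qualitatively different from ``planar frame plus per-face gadgets.'' It takes a wall graph $W$ (nested cycles $C_0,\dots,C_{2k}$ joined by connector-edges; the dodecahedron for $k=1$) and an isomorphic wall $W'$ on the \emph{same} cycle vertices using the complementary connector-edges, stellates both to get $5$-connected triangulations $W_s$ and $W_s'$, and defines $G$ as the union of the two stellations with \emph{all connector-edges deleted}. The $20k+2$ stellation vertices form an independent set against only $20k$ cycle vertices, giving at least two unmatched vertices. Crucially, $5$-connectivity is not argued face-locally: it comes from the observation that $G$ contains subdivisions of both $W_s$ and $W_s'$ (each deleted connector-edge of one wall is rerouted through a stellation vertex of the other), so five disjoint paths in $W_s'$ transfer to $G$ and defeat any cutting $4$-set. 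Without something playing the role of this global overlay, your plan does not close.
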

\begin{proof}
The graph $G$ is illustrated in Figure~\ref{fig:wall}(d);
its construction is not difficult but will be given here
via a number of subgraphs $W,W_s,W',W_s'$ because these are useful
for arguing 5-connectivity.  
Fix an integer $k\geq \max\{1,(N-2)/40\}$.
Common to all constructed graphs are $2k+1$ cycles $C_0,\dots,C_{2k}$, where $C_0$
and $C_{2k}$ have 5 vertices each while $C_1,\dots,C_{2k-1}$ have 10 vertices
each.  These cycles are arranged as nested cycles (in the figures they are drawn as horizontal lines on the flat cylinder).

\begin{figure}[ht]
\subfigure[~]{\includegraphics[width=0.49\linewidth,page=1,trim=30 0 40 0,clip]{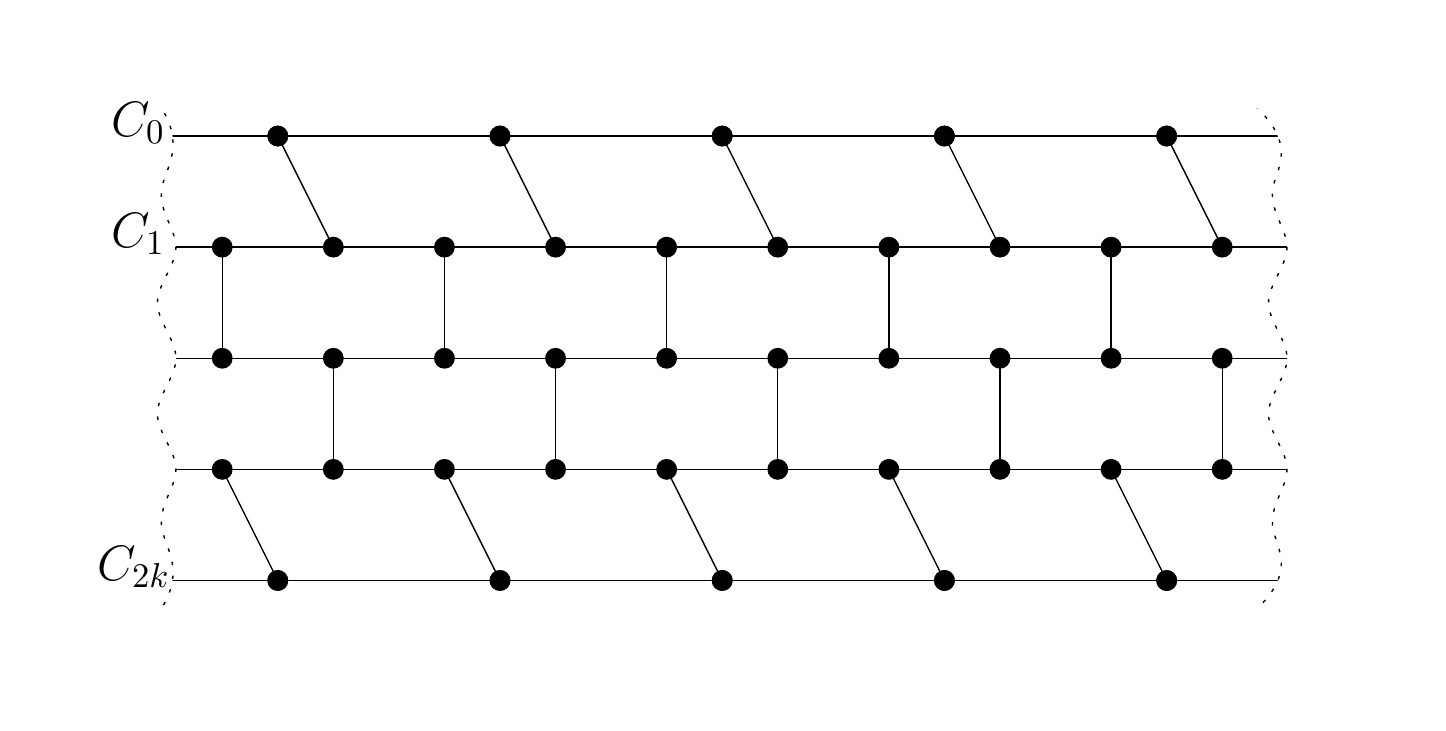}}
\hspace*{\fill}
\subfigure[~]{\includegraphics[width=0.49\linewidth,page=2,trim=30 0 40 0,clip]{wall.pdf}}
\vspace*{2mm}
\newline
\subfigure[~]{\includegraphics[width=0.49\linewidth,page=3,trim=30 0 40 0,clip]{wall.pdf}}
\hspace*{\fill}
\subfigure[~]{\includegraphics[width=0.49\linewidth,page=4,trim=30 0 40 0,clip]{wall.pdf}}
\caption{(a) The wall $W$.  (b) Its stellation $W_s$. 
(c) The stellation $W_s'$ of the other wall $W'$.   Five paths between
two vertices are in bold red.
(d) A 5-connected 1-planar graph without a near-perfect matching.
The corresponding paths are bold red.}
\label{fig:wall}
\end{figure}

Graph $W$ is the {\em wall graph}, where we add {\em connector-edges} between consecutive cycles such that all vertices have degree 3,
all faces incident to $C_0$ or $C_{2k}$ have degree 5 and all other faces have degree 6.
(For $k=1$ this is exactly the dodecahedron.)  In Figure~\ref{fig:wall}(a) the connector-edges are drawn vertically, except those incident to $C_0$ or $C_{2k}$, which are drawn downward-diagonal.  If instead we had used upward-diagonals from $C_0$ and $C_{2k}$, then we would get a graph $W'$ that is isomorphic to $W$, but uses the ``other'' vertical segments as connector-edges, see Figure~\ref{fig:wall}(c).

Let $W_s$ be the graph obtained from $W$ by stellating all faces, see Figure~\ref{fig:wall}(b).  Note that $W_s$ is triangulated and (as one verifies) has no separating 4-cycle; therefore $W_s$ is 5-connected.  The vertices in $S:=V(W_s)\setminus V(W)$ are called the {\em stellation vertices}; let $x_0,x_{2k}\in S$ be the stellation-vertices of the faces bounded by $C_0$ and $C_{2k}$.
Likewise let $W'_s$ be the stellation of $W'$, and let $S'$ be its stellation-vertices.  We use the {\em same} vertices $x_0$ and $x_{2s}$ to stellate $C_0$ and $C_{2k}$, but all other vertices in $S'$ are different from the ones in $S$.

Now define $G$ as follows.  It consists of cycles $C_0,\dots,C_{2k}$, stellation-vertices $S\cup S'$ (with $x_0$ and $x_{2k}$ added only once), and all edges incident to $S\cup S'$.  Put differently, $G$ is the union of $W_s$ and $W'_s$ after deleting the connector-edges.
Figure~\ref{fig:wall}(d) shows that $G$ is 1-planar.
Also, $W$ has $20k$ vertices and $10k{+}2$ faces, as does $W'$.  Since
$x_0$ and $x_{2k}$ are added only once, graph $G$ has hence
$n=20k+2(10k{+}2)-2=40k+2\geq N$ vertices.
Since the $20k+2$ stellation-vertices form an independent
set,  and there are only $20k$ other vertices, any matching has
at least two unmatched vertices.

It remains to argue that $G$ is 5-connected.  Roughly speaking, this
holds because $W_s$ and $W_s'$ are 5-connected, and $G$ contains 
subdivisions of $W_s$ and $W_s'$ as subgraphs (each connector-edge can be replaced
by a path through a stellation-vertex of the other wall-graph, see
Figure~\ref{fig:wall}(c-d)).  Formally, assume for
contradiction that $Q$ is a cutting-4-set, and let $y,y'$ be two vertices
in two distinct components of $G\setminus Q$.  Fix one vertex $z$ that 
belongs to one of the cycles $C_0,\dots,C_{2k}$, and furthermore $z\neq y,y'$
and $z\not \in Q$.   We claim that both $y$ and $y'$ are connected to $z$
in $G\setminus Q$; this is a contradiction since then $y,z,y'$ would all be
in one connected component of $G\setminus Q$.

We only show the existence of a path from $y$ to $z$ in $G\setminus Q$,
the argument is identical for $y'$.  We may also assume that $y\in W_s'$,
for if it is only in $W_s$ then a symmetric argument applies.  Since $z$
belongs to one of the cycles, also $z\in W_s'$.
Since $W'_s$ is 5-connected there are five interior vertex-disjoint paths
from $y$ to $z$ in $W'_s$.  Because $G$ contains a subdivision of $W'_s$
as a subgraph, these paths in $W'_s$ transfer to five interior
vertex-disjoint paths from $y$ to $z$ in $G$.
Since $y,z\not\in Q$ and $|Q|=4$, at most four of these paths can
be ``hit'' by $Q$, which means that at least one path exists even
in $G\setminus Q$, which makes $y$ connected to $z$ as desired.
\end{proof}

Note that the constructed graph is very close to having a near-perfect
matching, and we believe that this holds in general.

\begin{conjecture}
Every 5-connected 1-planar graph has a matching of size $\frac{n}{2}-O(1)$.
\end{conjecture}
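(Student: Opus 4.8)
The plan is to work through the Tutte--Berge formula \cite{Berge1958}. Writing $\mathrm{def}(G)=\max_{S\subseteq V}\bigl(o(G\setminus S)-|S|\bigr)$ for the number of vertices missed by a maximum matching, where $o(\cdot)$ counts odd components, a matching of size $\tfrac{n}{2}-O(1)$ exists exactly when $\mathrm{def}(G)=O(1)$. So it suffices to produce a universal constant $c$ with $o(G\setminus S)\le |S|+c$ for every vertex set $S$. The example of Lemma~\ref{lem:mindeg5} shows this is the right target: there the witnessing set $S$ is the set of cycle-vertices, for which $o(G\setminus S)=|S|+2$, so one cannot hope for $c=0$, and in particular cannot hope for a near-perfect matching in the strong sense.

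First I would record the structural consequences of $5$-connectivity. Fix $S$ and let $C_1,\dots,C_t$ be the components of $G\setminus S$. The neighbourhood of each $C_i$ in $S$ separates $C_i$ from the rest of $G$, so by Menger's theorem each $C_i$ has at least five \emph{distinct} neighbours in $S$, and every vertex has degree at least $5$. The natural object is the bipartite ``attachment'' graph $B$ with parts $S$ and $\{C_1,\dots,C_t\}$, in which $C_i$ is joined to each of its $S$-neighbours; then every component-vertex has degree at least $5$ and $|E(B)|\ge 5t\ge 5\,o(G\setminus S)$. I would split the attachment edges into the \emph{uncrossed} ones, which form a planar bipartite graph, and the \emph{crossed} ones, whose number is controlled because a $1$-planar drawing has at most $n-2$ crossings and, when $G$ is triangulated, each crossing lies inside a quadrilateral of four uncrossed edges (a \emph{kite}).

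The additive bound would be obtained by a discharging scheme on the drawing that assigns each odd component to a distinct vertex or face of $S$, leaving only $O(1)$ surplus. The planar part already does most of the work: a component with at least four uncrossed attachment edges contributes to a planar bipartite subgraph on $|S|+t'$ vertices, and the bound $4t'\le 2(|S|+t')-4$ forces $t'\le |S|-2$ such components in total. Thus the difficulty is concentrated entirely in components that reach $S$ through at most three uncrossed edges, hence through at least two crossed ones. Here I would exploit that two crossing attachment edges, together with the uncrossed sides of their kite, tie their two $S$-endpoints and their two components into a single planar face, so that crossings cannot be ``spent'' independently to manufacture new components; the non-triangulated case, where such kites may be absent, is where most of the case analysis will lie.

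I expect the main obstacle to be exactly the gap between a \emph{multiplicative} and an \emph{additive} bound. Applying a generic $1$-planar edge bound ($4N-8$, or $3N-8$ in the bipartite case \cite{Schumacher}) to $B$ only yields $o(G\setminus S)\le \tfrac{3}{2}|S|+O(1)$, far weaker than the conjectured $|S|+O(1)$. One cannot close this gap by passing to the planarization and invoking the much stronger planar bipartite bound $2N-4$, because a crossing between the attachment edges of two \emph{different} components is replaced, in the planarization, by a dummy vertex that \emph{merges} those two components; planarizing thus destroys precisely the component count we are trying to estimate. Nor can one reduce to a planar spanning subgraph and quote a matching theorem for planar graphs: the example of Lemma~\ref{lem:mindeg5} certifies that its matching must use crossed edges, since deleting one edge from each crossing pair strands the stellation vertices. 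Consequently the argument has to be carried out on $G$ itself, and the crux is to show, via the kite structure, that crossed attachment edges create at most $O(1)$ odd components beyond those already charged to $S$ by the uncrossed planar part.
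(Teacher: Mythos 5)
The statement you are trying to prove is stated in the paper as a \emph{conjecture}: the paper offers no proof of it and explicitly leaves it open (the surrounding text only observes that the construction of Lemma~\ref{lem:mindeg5} comes within two vertices of a perfect matching, and the conclusion lists the matching question for highly connected 1-planar graphs among the open problems). So there is no proof of the paper's to compare yours against, and your proposal does not close the gap: it is a plan whose central step you yourself flag as unresolved. The Tutte--Berge reformulation is the right one --- it suffices to show $o(G\setminus S)\le |S|+c$ for a universal $c$, and Lemma~\ref{lem:mindeg5} shows $c\ge 2$ is necessary --- and the observation that a generic 1-planar edge count on the attachment graph $B$ only gives $o(G\setminus S)\le \tfrac32|S|+O(1)$ correctly identifies why the problem is hard. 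But the passage from $\tfrac32|S|$ to $|S|+O(1)$ is exactly the conjecture, and the proposal contains no argument for it, only the hope that a discharging scheme on kites will work.

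Two of the intermediate steps also have concrete holes. First, the planar-part count ($4t'\le 2(|S|+t')-4$) presupposes that contracting each component $C_i$ to a single vertex turns the uncrossed attachment edges into a \emph{simple planar} bipartite graph. A component of $G\setminus S$ is connected in $G$ but need not be connected in the uncrossed skeleton of the drawing (its internal edges may all be crossed), so this contraction is not a minor operation on a plane graph, and planarity of the contracted attachment graph does not follow; moreover several uncrossed edges from one component to the same $S$-vertex must be identified, which the drawing need not permit consistently. Second, the kite structure you lean on (each crossing surrounded by a 4-cycle of uncrossed edges) is a property of \emph{triangulated} 1-plane graphs; the conjecture concerns arbitrary 5-connected 1-planar graphs, and you acknowledge that the non-triangulated case ``is where most of the case analysis will lie'' without supplying it. Until the additive bound on components attached through crossed edges is actually proved --- for all drawings, not just kite-augmented ones --- this remains a research programme rather than a proof.
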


\section{Conclusion}

In this paper, we studied the Hamiltonicity of 4-connected 1-planar graphs,
and showed that while in general they do not have a Hamiltonian path,
they always have a Hamiltonian cycle if they are 4-connected and triangulated.  

Among the most interesting open questions is whether higher connectivity
implies Hamiltonicity in all 1-planar graphs.  We have not been able to
construct a 6-connected 1-planar graph that does not have a near-perfect 
matching.  Do all 6-connected 1-planar graphs have a Hamiltonian cycle?
A Hamiltonian path?  Or at least a near-perfect matching?  How about
7-connected 1-planar graphs?  

\bibliographystyle{plain}
\bibliography{journal,full,gd,papers}

\end{document}